\documentclass[twoside]{article}

\usepackage[USenglish]{babel}
\usepackage{amsmath,amssymb,amsthm,url,braket,hyperref,cite,subfig,algorithm,algorithmic}

\DeclareMathOperator{\Sym}{Sym}
\DeclareMathOperator{\Geom}{Geom}
\DeclareMathOperator{\Perm}{Perm}

\DeclareMathOperator{\Tr}{Tr}

\newtheorem{lmm}{Lemma}

\textwidth=5.6truein

\textheight=8.0truein

\begin{document}

\setlength{\textheight}{8.0truein}


\normalsize


\thispagestyle{empty}

\setcounter{page}{1}


\vspace*{0.88truein}



\centerline{\bf COMPUTING THE QUANTUM GUESSWORK}

\vspace*{0.035truein}

\centerline{\bf A QUADRATIC ASSIGNMENT PROBLEM}

\vspace*{0.37truein}

\centerline{\footnotesize Michele Dall'Arno}

\vspace*{0.015truein}

\centerline{\footnotesize\it Department  of Computer Science
  and  Engineering,  Toyohashi   University  of  Technology}

\baselineskip=10pt

\centerline{\footnotesize\it  1-1 Hibarigaoka,  Tempaku-cho,
  Toyohashi, Aichi, 441-8580, Japan}

\vspace*{0.015truein}

\centerline{\footnotesize\it     Yukawa    Institute     for
  Theoretical Physics, Kyoto University}

\baselineskip=10pt

\centerline{\footnotesize\it   Sakyo-ku,  Kyoto,   606-8502,
  Japan}

\vspace*{10pt}

\centerline{\footnotesize Francesco Buscemi}

\vspace*{0.015truein}

\centerline{\footnotesize\it   Department  of   Mathematical
  Informatics,  Graduate   School  of   Informatics,  Nagoya
  University}

\baselineskip=10pt

\centerline{\footnotesize\it   Chikusa-Ku,  Nagoya,   Aichi,
  464-8601, Japan}

\vspace*{10pt}

\centerline{\footnotesize Takeshi Koshiba}

\vspace*{0.015truein}

\centerline{\footnotesize\it   Faculty   of  Education   and
  Integrated Arts and Sciences, Waseda University}

\baselineskip=10pt

\centerline{\footnotesize\it  Shinjuku-ku, Tokyo,  169-8050,
  Japan}

\vspace*{0.225truein}


\vspace*{0.21truein}

\begin{abstract}
  The  quantum guesswork  quantifies the  minimum number  of
  queries needed to guess the state of a quantum ensemble if
  one  is  allowed  to  query  only one  state  at  a  time.
  Previous approaches  to the  computation of  the guesswork
  were   based   on   standard   semi-definite   programming
  techniques and therefore lead to approximated results.  In
  contrast, we show that  computing the quantum guesswork of
  qubit  ensembles  with  uniform  probability  distribution
  corresponds to solving a  quadratic assignment problem and
  we provide an algorithm that,  upon the input of any qubit
  ensemble over  a discrete ring, after  finitely many steps
  outputs the exact closed-form expression of its guesswork.
  While in general the complexity of our guesswork-computing
  algorithm is factorial  in the number of  states, our main
  result consists  of showing a  more-than-quadratic speedup
  for symmetric  ensembles, a scenario corresponding  to the
  three-dimensional  analog of  the maximization  version of
  the turbine-balancing  problem.  To find  such symmetries,
  we provide an algorithm that,  upon the input of any point
  set  over  a  discrete  ring, after  finitely  many  steps
  outputs  its  exact  symmetries.  The  complexity  of  our
  symmetries-finding algorithm  is polynomial in  the number
  of  points.   As examples,  we  compute  the guesswork  of
  regular and quasi-regular sets of qubit states.
\end{abstract}

\vspace*{10pt}


\vspace*{3pt}


\vspace*{3pt}

{\centering{
  \begin{minipage}{4.5in}
    \footnotesize
    \baselineskip=10pt
        {\footnotesize\it Report number}\/: YITP-21-150
\end{minipage}}
\par}

\vspace*{1pt}

\section{Introduction}

We consider the  following communication scenario, described
in  terms  of  standard  concepts  and  results  in  quantum
information theory~\cite{Wil17}.  Let an ensemble of quantum
states be given.  At each  round, a referee prepares a state
from the ensemble.  The task is  to guess which state it is,
being  allowed  to query  one  state  at  a time  until  the
referee's answer is on the affirmative, at which point a new
round  begins.   The cost  function  is  represented by  the
average  number of  queries  needed to  correctly guess  the
state of the  ensemble, and is therefore referred  to as the
\textit{quantum guesswork}~\cite{Mas94, Ari96, AM98a, AM98b,
  Pli98,  MS04,  Sun07,  HS10, CD13,  SV18,  Sas18,  CCWF15,
  HKDW20, DBK22}.  Notice that,  if multiple states could be
queried  at a  time, the  corresponding cost  function would
instead be the entropy~\cite{HKDW20} of the ensemble.

The  most  general  strategy   consists  of  a  sequence  of
nondemolishing  quantum measurements  (quantum instruments),
each producing as a classical outcome the next query for the
referee. However, as further detailed in Ref.~\cite{HKDW20},
by considering  the composition of such  instruments, such a
strategy reduces to performing  a single quantum measurement
(on the single  copy of the state provided  by the referee),
whose  classical  outcomes  are  represented  by  tuples  of
ordered queries for  the referee.  In other  words, one will
first query the  referee for the state  corresponding to the
first entry  in the output  tuple. If  the answer is  on the
negative, one will proceed querying  for the second entry in
the  output tuple,  and so  on, with  the goal  of correctly
guessing with the minimum number of queries.

The usual approach~\cite{CCWF15} to compute the guesswork is
based on a factorial-size semidefinite program, outputting a
numerically approximated  result within any  given tolerance
in  polynomial time  in  the  (factorially growing)  problem
size. Here, instead, we are interested in an exact algorithm
to  compute   the  guesswork  in  finite   time,  where  our
computational model is a  machine capable of storing integer
numbers and  of performing  additions and  multiplication in
finite time.  Even the existence of such an algorithm is not
guaranteed a priori, given that the guesswork problem is, by
definition,  a  continuous  optimization problem.   For  the
qubit case, however, the equivalence of the guesswork with a
finite    optimization    problem    has    recently    been
shown~\cite{DBK22}.    Our   analysis    begins   with   the
observation that  such a  finite optimization problem  is an
instance  of the  (generally NP-hard~\cite{SG76})  quadratic
assignment   problem~\cite{KB57}    (see   also~\cite{Cel98,
  BCPP98} for reviews).

Our main  result consists in  showing how the  symmetries of
the ensemble, for whose characterization we provide an exact
polynomial-time  algorithm, can  be exploited  to achieve  a
more-than-quadratic  speedup  in   the  computation  of  the
guesswork.     This    scenario     corresponds    to    the
three-dimensional analog of the  maximization version of the
turbine-balancing   problem~\cite{LM88}  for   a  particular
vector  of   coefficients,  in  which  blades   are  ideally
symmetrically distributed  on a sphere instead  of a circle.
To illustrate our results, we provide implementations of our
symmetries-finding and guesswork-computing algorithms in the
C programming language,  and we use them  to exactly compute
the guesswork  of regular and quasi-regular  ensembles of up
to  twenty-four   states,  geometrically   corresponding  to
Platonic and Archimedean solids in the Bloch sphere.

\section{Formalization}
\label{sect:guesswork}

Qubit  states are  in one-to-one  correspondence with  Pauli
vectors, that is, three-dimensional  vectors within the unit
sphere.  Hence, an ensemble of $N$ qubit states with uniform
probability  distribution  and  without repetitions  can  be
represented  by  a finite  set  $\mathcal{V}$  of $N$  Pauli
vectors, that is
\begin{align*}
  \mathcal{V} \subseteq \left\{ v  \in \mathbb{R}^3 \; \Big|
  \; \left| v \right|_2 \le 1 \right\}.
\end{align*}
Following     Ref.~\cite{Knu11},      we     denote     with
$\mathcal{V}^{\underline{N}}$  the  set   of  $N$-tuples  on
$\mathcal{V}$ without  repetitions, that is  $\mathbf{v} \in
\mathcal{V}^{\underline{N}}$ if and only if
\begin{align*}
  \mathbf{v} = \left( v_i \right)_{i \in \left\{ 1, \dots, N
    \right\}}  \textrm{ s.t.   } \bigcup_{i  \in \left\{  1,
    \dots, N \right\}} \left\{ v_i \right\} = \mathcal{V}.
\end{align*}
Any qubit  effect can be  represented as an  affine function
from  the set  of  qubit  states to  $[0,  1]$.  Hence,  any
$\mathcal{V}^{\underline{N}}$  valued qubit  measurement can
be represented as a family $( \pi_{\mathbf{v}} )_{\mathbf{v}
  \in \mathcal{V}^{\underline{N}}}$ of such affine functions
indexed by $\mathcal{V}^{\underline{N}}$ such
that   $\sum_{\mathbf{v}  \in   \mathcal{V}^{\underline{N}}}
\pi_{\mathbf{v}} = u$, where $u$ is the unit effect, that is
\begin{align*}
  \mathcal{P}  = \left(  \pi_{\mathbf{v}} :  \mathcal{S} \to
  \left[     0,1     \right]     \right)_{\mathbf{v}     \in
    \mathcal{V}^{\underline{N}}}     \textrm{    s.t.      }
  \sum_{\mathbf{v}      \in     \mathcal{V}^{\underline{N}}}
  \pi_{\mathbf{v}} = u.
\end{align*}

For   any  such   qubit  ensemble   $\mathcal{V}$  and   any
$\mathcal{V}^{\underline{N}}$-valued    qubit    measurement
$\mathcal{P}$,   the  \textit{guesswork}   $G  (\mathcal{V},
\mathcal{P})$ is the number of  queries needed on average to
correctly guess  the state  of $\mathcal{V}$  when measuring
$\mathcal{P}$,  that  is  (see  Ref.~\cite{DBK22}  for  more
details)
\begin{align*}
  G  \left( \mathcal{V},  \mathcal{P}  \right) :=  \frac1{N}
  \sum_{\substack{\mathbf{v}                             \in
      \mathcal{V}^{\underline{N}}\\ i \in  \left\{ 1, \dots,
      N \right\}}} \pi_{\mathbf{v}} \left( v_i \right) \; i,
\end{align*}
where $v_i$ denotes the $i$-th  component of vector $v$. The
\textit{minimum  guesswork}   $G_{\min}  \left(  \mathcal{V}
\right)$ is the  minimum of the guesswork  $G ( \mathcal{V},
\mathcal{P} )$ over any $\mathcal{V}^{\underline{N}}$-valued
measurement $\mathcal{P}$, that is
\begin{align}
  \label{eq:gw}
  G_{\min}      \left(      \mathcal{V}      \right)      :=
  \min_{\mathcal{P}} G  \left( \mathcal{V}, \mathcal{P}
  \right).
\end{align}

\section{Main result}

Our first  observation is  that the optimization  problem in
Eq.~\eqref{eq:gw} corresponds to a  specific instance of the
quadratic assignment  problem. Since the former  consists of
an optimization over a continuous set, the latter represents
its closed-form solution.

\begin{lmm}
  \label{thm:guesswork}
  For any  ensemble $\mathcal{V}$ of $N$  Pauli vectors with
  uniform  probability distribution,  the minimum  guesswork
  $G_{\min}  (  \mathcal{V} )$  is  given  by the  following
  quadratic assignment problem
  \begin{align*}
    G_{\min} \left( \mathcal{V} \right) = \frac12 \left( N +
    1 - \frac1N  \sqrt{ \max_{X \in \Perm  \left( N \right)}
      \Tr \left[ W X D X^T \right]} \right),
  \end{align*}
  where  $\Perm(N)$   denotes  the  set  of   $N  \times  N$
  permutation matrices, $D$ is the $N \times N$ matrix given
  by $D = d d^T$, in  turn $d$ is the $N$-dimensional column
  vector whose $i$-th entry is given by $2i - N + 1$, $W$ is
  the $N \times N$ Gram matrix given by $W := V^T V$, and in
  turn  $V$ is  any $3  \times N$  matrix whose  columns are
  given   by   the   elements   of   $\mathcal{V}$   without
  repetitions.
\end{lmm}

\begin{proof}
  From  Theorem 1  and Corollary  1 of  Ref.~\cite{DBK22} it
  immediately follows that
  \begin{align}
    \label{eq:optimization}
    G_{\min} \left( \mathcal{V} \right) = \frac12 \left( N +
    1    -     \frac1N    \sqrt{     \max_{\mathbf{v}    \in
        \mathcal{V}^{\underline{N}}}   \left|   \left\langle
      \mathbf{v} \right\rangle \right|^2_2} \right),
  \end{align}
  where  for  any  $N$-tuple  $\mathbf{v}$  of  elements  of
  $\mathcal{V}$  without repetitions  the function  $\langle
  \mathbf{v} \rangle$ is given by
  \begin{align}
    \label{eq:avg}
    \left\langle \mathbf{v} \right\rangle  := \sum_{i = 1}^N
    \left( 2 i - N + 1 \right) v_i.
 \end{align}
  The  result then  follows by  observing that,  by explicit
  computation, one has
  \begin{align*}
    \max_{\mathbf{v} \in \mathcal{V}^{\underline{N}}} \left|
    \left\langle  \mathbf{v}   \right\rangle  \right|^2_2  =
    \max_{X \in \Perm \left( N \right)} \Tr \left[ W X D X^T
      \right].
  \end{align*}
\end{proof}

To begin with, let us analyze the complexity of solving this
quadratic assignment problem with a naive exhaustive search.
The  complexity of  the exhaustive  search is  given by  the
product  of   the  complexity   of  generating   all  tuples
$\mathbf{v}$   in   $\mathcal{V}^{\underline{N}}$  and   the
complexity  of computing  the  function $\langle  \mathbf{v}
\rangle$ for  each tuple.  According  to Eq.~\eqref{eq:avg},
the complexity of computing  $\langle \mathbf{v} \rangle$ is
$O(N)$.  This can be improved if the tuples are generated so
that each  new tuple  is obtained from  the previous  one by
swapping only  two elements.  This is  possible, for example
by     means     of      Heap's     or     Johnson-Trotter's
algorithm~\cite{Sed77,  Knu11}.   In   this  case,  $\langle
\mathbf{v} \rangle$ does not  need be entirely recomputed at
each  step following  Eq.~\eqref{eq:avg}: its  value can  be
stored,  and after  each  new tuple  is generated,  $\langle
\mathbf{v}  \rangle$  can be  updated  by  only taking  into
account the contributions of the  two states affected by the
swapping.  From  Eq.~\eqref{eq:avg}, if  tuples $\mathbf{v}$
and $\mathbf{v}'$ differ by the swap of elements $i$ and $j$
with $i < j$, then we have
\begin{align}
  \label{eq:update}
  \left\langle  \mathbf{v}'   \right\rangle  =  \left\langle
  \mathbf{v} \right\rangle  + 2  \left(j - i  \right) \left(
  v_i - v_j \right),
\end{align}
that   represents  a   constant-time   update  of   $\langle
\mathbf{v} \rangle$.  Hence, the complexity of the algorithm
is  $O(N!)$.   Since  this  value also  corresponds  to  the
cardinality  of set  $\mathcal{V}^{\underline{N}}$, it  also
represents the optimal complexity  for an exhaustive search.
By denoting  with $\operatorname{nextJT}$ the  function that
returns the pair of indexes $(i, j)$ that need to be swapped
to  generate   the  next   tuple  according  to   Heap's  or
Johnson-Trotter     algorithm~\cite{Sed77,    Knu11}     and
$\operatorname{false}$ if the end  of the algorithm has been
reached, we arrive at Algorithm~\ref{algo:gw}.

\begin{algorithm}
\caption{Guesswork (naive exhaustive search)}
\label{algo:gw}
\begin{algorithmic}
  \REQUIRE  finite set $\mathcal{V}$  of $N$  Pauli  vectors
  \ENSURE $G_{\min} ( \mathcal{V} ) = (N + 1 - \sqrt{g}/N)/2$
  \STATE $\mathbf{v} \in \mathcal{V}^{\underline{N}}$
  \STATE $\overline{v} \gets \langle \mathbf{v} \rangle$
  \STATE $g \gets 0$

  \WHILE {$(i, j) \gets \operatorname{nextJT}(\mathbf{v})$}
  \STATE $v_i \leftrightarrow v_j$
  \STATE $\overline{v} \gets \overline{v}  + 2 (j-i) (v_i - v_j)$
  \STATE $g \gets \max \left( g, | \overline{v} |_2^2 \right)$
  \ENDWHILE
  
  \RETURN $g$
\end{algorithmic}
\end{algorithm}

Our main result consists in showing that, in the presence of
symmetries   and  specifically   if  set   $\mathcal{V}$  is
centrally symmetric or vertex  transitive, the complexity of
Algorithm~\ref{algo:gw}    can     be    reduced     by    a
more-than-quadratic factor. This scenario corresponds to the
three-dimensional analog of the  maximization version of the
turbine-balancing   problem~\cite{LM88}  for   a  particular
vector  of   coefficients,  in  which  blades   are  ideally
symmetrically distributed on a sphere instead of a circle.

A set $\mathcal{V}$ of  Pauli vectors is centrally symmetric
if and  only if  for any  $\mathbf{v} \in  \mathcal{V}$ also
$-\mathbf{v}  \in  \mathcal{V}$.   A  set  $\mathcal{V}$  is
vertex  transitive  if   and  only  if  for   any  pair  $\{
\mathbf{v}_0, \mathbf{v}_1 \}  \subseteq \mathcal{V}$, there
exists  orthogonal matrix  $O$ such  that $\mathbf{v}_1  = O
\mathbf{v}_0$  and   $\mathcal{V}  =  O   \mathcal{V}$.   We
postpone to Section~\ref{sect:symmetries}  the discussion of
a  polynomial-time  algorithm   that  exactly  computes  the
symmetries of $\mathcal{V}$.

For any $N$-tuple  $\mathbf{v} = (v_1, \dots,  v_N)$, let us
define  the  reversed  $N$-tuple  $\overline{\mathbf{v}}  :=
(v_N, \dots,  v_1)$ and the opposite  $N$-tuple $-\mathbf{v}
:=  (-v_1, \dots,  -v_N)$,  and  let $\mathcal{T}  \subseteq
\mathcal{V}^{\underline{N}}$  denote the  set  of tuples  in
which  each pair  of centrally  symmetric vectors  appear in
symmetric positions within the tuple, that is
\begin{align*}
  \mathcal{T}      :=       \left\{      \mathbf{v}      \in
  \mathcal{V}^{\underline{N}}  \;  \Big|  \;  -\mathbf{v}  =
  \overline{\mathbf{v}} \right\}.
\end{align*}
Moreover,  for any  $v \in  \mathcal{V}$ let  $\mathcal{T}_v
\subseteq  \mathcal{V}^{\underline{N}}$  denote the  set  of
tuples with fixed point $v_N = v$, that is
\begin{align*}
  \mathcal{T}_v      :=      \left\{     \mathbf{v}      \in
  \mathcal{V}^{\underline{N}} \; \Big| \; v_N = v \right\}.
\end{align*}
The following result holds.

\begin{lmm}[Symmetries]
  For any given  set $\mathcal{V}$ of $N$  Pauli vectors, if
  $\mathcal{V}$ is centrally symmetric or vertex transitive,
  there exists a tuple $\mathbf{v}$ attaining the maximum in
  Eq.~\eqref{eq:optimization}  such   that  $\mathbf{v}  \in
  \mathcal{T}$    or    $\mathbf{v}   \in    \mathcal{T}_v$,
  respectively.   Moreover,  if $\mathcal{V}$  is  centrally
  symmetric  and vertex  transitive,  there  exists a  tuple
  $\mathbf{v} \in \mathcal{T}  \cap \mathcal{T}_v$ attaining
  the maximum in Eq.~\eqref{eq:optimization}.
\end{lmm}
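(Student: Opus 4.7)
The plan is to tackle the three claims in order of increasing difficulty: the vertex-transitive case by group averaging, the centrally-symmetric case by an exchange argument on an optimal tuple, and the combined case by composing the two. For vertex transitivity, any symmetry of $\mathcal{V}$ is realized by an orthogonal map $g\in O(3)$ with $g(\mathcal{V})=\mathcal{V}$; applied entry-wise it sends $\mathcal{V}^{\underline{N}}$ to itself, and because Eq.~\eqref{eq:avg} is linear in the entries while $g$ is orthogonal, $|\langle g(\mathbf{v})\rangle|_2 = |g\langle\mathbf{v}\rangle|_2 = |\langle\mathbf{v}\rangle|_2$. Given any optimal $\mathbf{v}^{*}$ and any target $v\in\mathcal{V}$, transitivity supplies some $g$ with $g(v_N^{*})=v$, so $g(\mathbf{v}^{*})$ is an optimal tuple in $\mathcal{T}_v$.

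For the centrally-symmetric case I would first establish that the involution $\sigma(\mathbf{v}):=-\overline{\mathbf{v}}$ on $\mathcal{V}^{\underline{N}}$ preserves $\langle\mathbf{v}\rangle$: central symmetry gives $\sum_{w\in\mathcal{V}} w = 0$, and combined with the identity $(2i-N+1)+(2(N+1-i)-N+1)=4$ satisfied by the coefficients of Eq.~\eqref{eq:avg}, a direct calculation yields $\langle\sigma(\mathbf{v})\rangle = \langle\mathbf{v}\rangle$. Since the fixed-point set of $\sigma$ is precisely $\mathcal{T}$, it remains to show that the optimum set meets $\mathcal{T}$, for which I would run an exchange argument: among all optima, pick $\mathbf{v}^{*}$ that maximizes the number of indices $i$ with $v_{N+1-i}^{*}=-v_i^{*}$, and suppose for contradiction that some index $i$ is a defect. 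Central symmetry locates the missing antipode $-v_i^{*}=v_j^{*}$ at some position $j\ne N+1-i$; I would then swap $v_j^{*}$ with $v_{N+1-i}^{*}$ (tracking the change in $\langle\mathbf{v}\rangle$ via Eq.~\eqref{eq:update}), possibly followed by a compensating swap, to produce an optimal tuple with strictly more antipodal positions, contradicting maximality.

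The combined case then follows from the observation that $\mathcal{T}$ is stable under orthogonal symmetries of $\mathcal{V}$: if $v_{N+1-i}=-v_i$ then $g(v_{N+1-i})=-g(v_i)$ by linearity of $g$. Taking an optimal $\mathbf{v}^{*}\in\mathcal{T}$ delivered by the centrally-symmetric argument and applying a symmetry $g$ with $g(v_N^{*})=v$ produces an optimal tuple in $\mathcal{T}\cap\mathcal{T}_v$. The main obstacle I anticipate is the exchange argument in the centrally-symmetric step: $\sigma$-invariance of $\langle\mathbf{v}\rangle$ only guarantees that the $\sigma$-orbit of any optimum consists of optima, not that this orbit meets $\mathcal{T}$, and a single swap generically changes $\langle\mathbf{v}\rangle$ by the nonzero vector of Eq.~\eqref{eq:update}; the real technical work is to verify that every defect can be eliminated at no cost to $|\langle\mathbf{v}\rangle|_2^2$, either by a single value-preserving swap or by a compensating pair of swaps.
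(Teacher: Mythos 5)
Your vertex-transitive and combined steps are correct and essentially coincide with the paper's (much terser) argument: the paper also restricts to $\mathcal{T}_v$ by noting that every tuple is unitarily equivalent to one with $v_N=v$, and your explicit observation that $\mathcal{T}$ is stable under linear symmetries of $\mathcal{V}$ supplies a detail the paper leaves implicit in the combined case. The gap is in the centrally-symmetric step, and you flag it yourself without closing it. The invariance $\langle -\overline{\mathbf{v}}\rangle=\langle\mathbf{v}\rangle$ (which does hold, via $\sum_{w\in\mathcal{V}}w=0$ together with your coefficient identity) only shows that the involution $\sigma$ permutes the set of optima; it gives no reason for an optimum to lie in the fixed-point set $\mathcal{T}$ of $\sigma$. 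Your exchange argument is a plan, not a proof: a single swap changes $\langle\mathbf{v}\rangle$ by the generically nonzero vector of Eq.~\eqref{eq:update}, and you neither exhibit the ``compensating swap'' nor prove that one exists, so the centrally-symmetric claim --- the substantive half of the lemma --- remains unestablished. For comparison, the paper does not prove it either: it invokes Lemma 4 of \cite{DBK21}.

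If you want a self-contained argument, the workable route is linearization rather than ad hoc exchanges. Write $|\langle\mathbf{v}\rangle|_2=\max_{|u|_2=1}u\cdot\langle\mathbf{v}\rangle$ and, using $\sum_{w\in\mathcal{V}}w=0$, replace the coefficients $2i-N+1$ of Eq.~\eqref{eq:avg} by the antisymmetric ones $c_i:=2i-N-1$, which satisfy $c_{N+1-i}=-c_i$ and leave $\langle\mathbf{v}\rangle$ unchanged. For fixed $u$, the rearrangement inequality forces any maximizer of $\sum_i c_i\,(u\cdot v_i)$ to list the elements of $\mathcal{V}$ sorted by $u\cdot v$; since $\mathcal{V}=-\mathcal{V}$, the multiset $\{u\cdot v\}_{v\in\mathcal{V}}$ is symmetric about $0$, so among the sorted tuples one can choose the antipodal pairing within each tie class (the class with $u\cdot v=0$ being itself closed under negation), yielding a maximizer in $\mathcal{T}$ for that $u$. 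Exchanging the two maxima then gives $\max_{\mathbf{v}\in\mathcal{V}^{\underline{N}}}|\langle\mathbf{v}\rangle|_2=\max_{\mathbf{v}\in\mathcal{T}}|\langle\mathbf{v}\rangle|_2$; equivalently, taking $u$ along $\langle\mathbf{v}^*\rangle$ for a global optimum $\mathbf{v}^*$, transpositions within a tie class preserve $u\cdot\langle\cdot\rangle$ and hence cannot drop the norm below the maximum, which is the precise sense in which your ``value-preserving swaps'' exist. Without some such argument (or an explicit appeal to Lemma 4 of \cite{DBK21}), the proof is incomplete.
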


\begin{proof}
  Let  us   consider  the  centrally  symmetric   case.   By
  specializing Lemma  4 of Ref.~\cite{DBK22} to  the case of
  qubit ensemble with  uniform probability distribution, one
  has
  \begin{align*}
    \max_{\mathbf{v} \in \mathcal{V}^{\underline{N}}} \left|
    \left\langle  \mathbf{v}   \right\rangle  \right|_2^2  =
    \max_{\mathbf{v}  \in  \mathcal{T}} \left|  \left\langle
    \mathbf{v} \right\rangle \right|_2^2,
  \end{align*}
  that  proves the  statement.

  Let us consider the  vertex transitive case. By definition
  of  vertex transitivity,  one has  that the  range of  the
  squared norm $| \cdot |_2^2$  is unchanged if the range is
  restricted    from     $\mathcal{V}^{\underline{N}}$    to
  $\mathcal{T}_v$, that is
  \begin{align*}
    \left|   \langle   \mathcal{V}^{\underline{N}}   \rangle
    \right|_2^2  =  \left|   \langle  \mathcal{T}_v  \rangle
    \right|_2^2,
  \end{align*}
  that proves the statement.

  Let  us  consider  the   centrally  symmetric  and  vertex
  transitive  case. Due  to central  symmetry, there  exists
  tuple    $\mathbf{v}$    attaining    the    maximum    in
  Eq.~\eqref{eq:optimization}  such   that  $\mathbf{v}  \in
  \mathcal{T}$.   Due to  vertex  transitivity, every  tuple
  $\mathbf{v}$  is  unitarily  equivalent   to  a  tuple  in
  $\mathcal{T}_v$. Hence the statement is proved.
\end{proof}

The set $\mathcal{T}$ can be  generated as follows.  Let set
$\mathcal{V}' \subseteq  \mathcal{V}$ be  any subset  of the
set $\mathcal{V}$ of states, containing one element for each
pair of centrally  symmetric elements, that is,  for each $v
\in \mathcal{V}$  either $v$  or $-v$ is  in $\mathcal{V}'$,
but not both.  Let also set $\overline{\mathcal{V}}'$ be the
complement     of     set     $\mathcal{V}$,     that     is
$\overline{\mathcal{V}'}     :=    \mathcal{V}     \setminus
\mathcal{V}'$ or  equivalently $\overline{\mathcal{V}'}  = -
\mathcal{V}'$. First,  we show  that sets  $\mathcal{T}$ and
$\{ -1,  +1 \}^{N/2}  \times \mathcal{V}'^{\underline{N/2}}$
are in one-to-one correspondence. For any tuple $\mathbf{v}$
in  $\mathcal{T}$, one  has that  tuples $\boldsymbol{\tau}$
and $\mathbf{v}'$ given by
\begin{align*}
  \tau_i := \begin{cases}
    +1 & \textrm{ if } v_i \in \mathcal{V}',\\
    -1 & \textrm{ otherwise},
  \end{cases}
\end{align*}
and
\begin{align*}
  v_i := \begin{cases}
    v_i & \textrm{ if } v_i \in \mathcal{V}',\\
    -v_i & \textrm{ otherwise},
  \end{cases}
\end{align*}
for any  $i$ in  $\{ 1, \dots,  N/2 \}$, are  in $\{  -1, +1
\}^{N/2}$        and       $\mathcal{V}'^{\underline{N/2}}$,
respectively. Vice-versa,  for any  tuple $\boldsymbol{\tau}
\in  \{ -1,  +1  \}^{N/2}$ and  any  tuple $\mathbf{v}'  \in
\mathcal{V}'^{\underline{N/2}}$ one has that tuple
\begin{align}
  \label{eq:construction}
  \mathbf{v} :=  \boldsymbol{\tau} \odot  \mathbf{v}' \oplus
  \overline{ -\boldsymbol{\tau} \odot \mathbf{v}'}
\end{align}
is in  $\mathcal{T}$, where $\odot$ and  $\oplus$ denote the
Hadamard (that is, element-wise)  product and the direct sum
(that is, concatenation) of tuples, respectively.  Moreover,
due  to  Eq.~\eqref{eq:construction}   any  element  of  set
$\mathcal{T}$  can  be   generated  from  the  corresponding
element     of     set      $\{-1,     +1\}^{N/2}     \times
\mathcal{V}'^{\underline{N/2}}$   in  constant   time.   The
complexities~\cite{Sed77,Knu11}  of generating  the elements
of     set     $\{-1,      1\}^{N/2}$     and     of     set
$\mathcal{V}'^{\underline{N/2}}$   are    $O(2^{N/2})$   and
$O((N/2)!)$,   respectively.   Moreover,   through  a   Gray
code~\cite{Knu11} it is possible to iteratively generate all
tuples $\boldsymbol{\tau}$ in $\{  -1, 1 \}^{N/2}$ such that
each  new   tuple  $\boldsymbol{\tau}'$  differs   from  the
previous tuple  $\boldsymbol{\tau}$ by  a single  sign flip.
If  the  sign  flip  occurs in  the  $i$-th  position,  from
Eq.~\eqref{eq:avg}  it  immediately  follows  that  function
$\langle \mathbf{v} \rangle$ is updated as follows
\begin{align}
  \label{eq:flip}
  \left\langle  \mathbf{v}'  \right\rangle_N =  \left\langle
  \mathbf{v} \right\rangle_N - 4 \left( 2  i - N + 1 \right)
  v_i,
\end{align}
that represents a constant-time  update.  Since for even $N$
one has  $N!!  =  2^{N/2} (  N/2 )!$ (with  $N!! :=  N (N-2)
(N-4)  \dots$  we  denote the  double  factorial  function),
central symmetry  can be exploited to  reduce the complexity
of Algorithm~\ref{algo:gw} by a factor of $(N - 1)!!$.

The  set  $\mathcal{T}_{v}$  can be  generated  as  follows.
First,  notice that  sets $\mathcal{T}_v$  and $(\mathcal{V}
\setminus   v)^{\underline{N  -   1}}$  are   in  one-to-one
correspondence.     For    any   tuple    $\mathbf{v}$    in
$\mathcal{T}_v$, one  has that its  restriction $\mathbf{v}'
:= \mathbf{v} |_{\{1,  \dots, N - 1\}}$  is in $(\mathcal{V}
\setminus  v)^{\underline{N  -  1}}$.  Vice-versa,  for  any
tuple     $\mathbf{v}'$    in     $(\mathcal{V}    \setminus
v)^{\underline{N - 1}}$, one has that its extension
\begin{align}
  \label{eq:extension}
  v_i  := \begin{cases}  v'_i &  \textrm{  if }  i \in  \{1,
    \dots, N - 1\},\\ v & \textrm{ otherwise},
  \end{cases}
\end{align}
is     in     $\mathcal{T}_v$.      Moreover,     due     to
Eq.~\eqref{eq:extension}, any element of set $\mathcal{T}_v$
can  be   generated  from   the  corresponding   element  of
$(\mathcal{V} \setminus v)^{\underline{N  - 1}}$ in constant
time.   Since all  the elements  of $(\mathcal{V}  \setminus
v)^{\underline{N  -  1}}$  can   be  generated  in  $(N-1)!$
steps~\cite{Sed77,Knu11},   vertex   transitivity   can   be
exploited      to     reduce      the     complexity      of
Algorithm~\ref{algo:gw} by a factor of $N$.

The set $\mathcal{T} \cap \mathcal{T}_v$ can be generated by
concatenating   the   two    previous   methods.    Assuming
$\mathcal{V}$ is centrally  symmetric and vertex transitive,
for  any $v  \in \mathcal{V}$,  let $\mathcal{V}'  \subseteq
\mathcal{V} \setminus \{  v, -v\}$ be such that  for any $v'
\in \mathcal{V}$  one has $-v' \not\in  \mathcal{V}'$. Then,
sets $\mathcal{T} \cap \mathcal{T}_v$  and $\{-1, 1\}^{N/2 -
  1}  \times  \mathcal{V}'^{\underline{N/2  -  1}}$  are  in
one-to-one  correspondence.    The  correspondence   can  be
explicitly  built  as  before.   Moreover,  any  element  of
$\mathcal{T} \cap  \mathcal{T}_v$ can be generated  from the
corresponding  element  of  $\{-1,   1\}^{N/2  -  1}  \times
\mathcal{V}'^{\underline{N/2 - 1}}$  in constant time. Since
the  complexities~\cite{Sed77,  Knu11}   of  generating  the
elements    of   sets    $\{-1,   1\}^{N/2    -   1}$    and
$\mathcal{V}'^{\underline{N/2 - 1}}$  are $O(2^{N/2-1})$ and
$O((N/2 -  1)!)$, respectively, central symmetry  and vertex
transitivity can  be exploited  to reduce the  complexity of
Algorithm~\ref{algo:gw}  by a  factor of  $N (N-1)!!$,  that
represents a  more-than-quadratic speedup. By  denoting with
$\operatorname{nextGray}$  the  function  that  returns  the
index  $k$ that  needs to  be flipped  to generate  the next
tuple   according   to   the  Gray   code~\cite{Knu11}   and
$\operatorname{false}$  if one  cycle of  the code  has been
completed  and the  algorithm  is back  to  the first  tuple
produced, we arrive at Algorithm~\ref{algo:gwsym}.

\begin{algorithm}
\caption{Guesswork (more-than-quadratic speedup)}
\label{algo:gwsym}
\begin{algorithmic}
  \REQUIRE  finite centrally symmetric and vertex invariant set  $\mathcal{V}$  of $N$ Pauli  vectors
  \ENSURE $G_{\min} ( \mathcal{V} ) = (N + 1 - \sqrt{g}/N)/2$
  \STATE $\mathbf{v} \in \mathcal{V}^{\underline{N/2-1}}$ s.t. $\forall v \in \mathbf{v}$ one has $-v \not\in \mathbf{v}$
  \STATE $\overline{v} \gets \langle \mathbf{v} \rangle$
  \STATE  $g \gets 0$
  \WHILE {$(i, j) \gets \operatorname{nextJT} ( \mathbf{v} )$}
  \STATE $v_i \leftrightarrow v_j$
  \STATE $\overline{v} \gets \overline{v}  + 2 (j-i) (v_i - v_j)$
  \STATE $\boldsymbol{\tau} \in \{ -1, 1 \}^{N/2-1}$
  \WHILE {$k \gets \operatorname{nextGray} ( \boldsymbol{\tau} )$}
  \STATE $v_k \gets - v_k$
  \STATE $\overline{v} \gets \overline{v} - 4 ( 2 k - N + 1) v_k$  
  \STATE $g \gets \max ( g, | \overline{v} |_2^2 )$
  \ENDWHILE
  \ENDWHILE
  \RETURN $g$
\end{algorithmic}
\end{algorithm}

Table~\ref{tab:complexity}  summarizes the  results of  this
section.

\begin{table}[h!]
    \begin{center}
    \caption{Complexity      of     Algorithms~\ref{algo:gw}
      and~\ref{algo:gwsym} for the  exact computation of the
      minimum   guesswork  of   any  given   qubit  ensemble
      $\mathcal{V}$  with uniform  probability distribution,
      as a function of the symmetries of $\mathcal{V}$.}
    \label{tab:complexity}
    \begin{tabular}{|l|c|c|}
      \hline
      Symmetries & Complexity & Speedup (with respect to no symmetries)\\
      \hline
      No symmetries & $O(N!)$ & $1$\\
      Central symmetry & $O(N!!)$ & $(N - 1)!!$\\
      Vertex transitivity & $O((N-1)!)$ & $N$\\
      Central symm. \& vertex trans. & $O((N-2)!!)$ & $N (N - 1)!!$\\
      \hline
    \end{tabular}
  \end{center}
\end{table}

\section{Explicit examples}
\label{sect:examples}

In   this  section   we  apply   Algorithms~\ref{algo:gwsym}
and~\ref{algo:sym}   to   compute  the   exact   closed-form
expression  for   the  minimum  guesswork  of   regular  and
quasi-regular  qubit ensembles  up to  twenty-four vertices.
We also provide~\cite{Dal21} a  C language implementation of
such algorithms.

First, let us discuss ensembles of Pauli vectors whose
coordinates can be represented (up to a scaling) by the ring
of integers.  These are the tetrahedron, octahedron, cube,
truncated tetrahedron, cuboctahedron, and truncated
octahedron. The values of the guesswork for tuples of Pauli
vectors on the ring of integers are reported in
Table~\ref{tab:integer}.

\begin{table}[h!]
    \begin{center}
    \caption{Exact  closed-form  expression and  approximate
      numerical value  of the  minimum guesswork  of regular
      and quasi-regular  tuples of qubit states  on the ring
      of integers, as given by Algorithm~\ref{algo:gwsym}.}
    \label{tab:integer}
    \begin{tabular}{|l|c|c|c|}
      \hline
      $\mathcal{V}$ & $N$ & $g$ & $G_{\min}$\\
      \hline
      Tetrahedron & $4$ & $\frac{80}{3}$ & $\sim 1.8545$\\
      Octahedron & $6$ & $140$ & $\sim 2.5140$\\
      Cube & $8$ & $\frac{1344}{3}$ & $\sim 3.1771$\\
      Truncated tetrahedron & $12$ & $\frac{25168}{11}$ & $\sim 4.5070$\\
      Cuboctahedron & $12$ & $\frac{4560}2$ & $\sim 4.5104$\\
      Truncated octahedron & $24$ & $\frac{183440}5$ & $\sim 8.5096$\\
      \hline
    \end{tabular}
  \end{center}
\end{table}

Second,  let us  discuss  ensembles of  Pauli vectors  whose
coordinates can be represented (up to a scaling) by the ring
\begin{align*}
  p_k \left( \mathbf{z} \right) := z_0 + \sqrt{k} z_1,
\end{align*}
where $\mathbf{z} = (z_0 , z_1) \in \mathbb{Z}^2$ and $k$ is
a positive non square integer  constant that only depends on
the  ensemble.   These  are the  icosahedron,  dodecahedron,
truncated   cube,   and    rhombicuboctaedron.    To   apply
Algorithm~\ref{algo:sym}  on   a  machine   that  implements
integer arithmetic, we need  to derive integer formulae for
the arithmetic operations of sum, difference, multiplication
by an integer, and square.  They are clearly given by
\begin{align*}
  p_k \left(  \mathbf{z} \right) \pm p_k  \left( \mathbf{z}'
  \right) = p_k \left( z_0 \pm z'_0, z_1 \pm z'_1 \right),
\end{align*}
\begin{align*}
  z  \;  p_k  \left(  \mathbf{z}  \right)  =  p_k  \left(  z
  \mathbf{z} \right),
\end{align*}
and
\begin{align*}
  p_k \left(  \mathbf{z} \right)^2  = p_k  \left( z_0^2  + k
  z_1^2 , 2 z_0 z_1 \right).
\end{align*}
We  also need  an integer  formula to  compare numbers.   By
direct inspection one has $p_k (  \mathbf{z} ) \ge 0$ if and
only if
\begin{align*}
  \left( z_0 \ge 0 \textrm{ and  } z_0^2 \ge k z_1^2 \right)
  \textrm{ or } \left( z_1 \ge  0 \textrm{ and } z_0^2 \le k
  z_1^2 \right).
\end{align*}
The values of  the guesswork for tuples of  Pauli vectors on
this ring are reported in Table~\ref{tab:golden}.

\begin{table}[h!]
    \begin{center}
    \caption{Exact  closed-form  expression and  approximate
      numerical value  of the  minimum guesswork  of regular
      and quasi-regular  tuples of qubit states  on the ring
      $z_0     +    \sqrt{k}     z_1$,    as     given    by
      Algorithm~\ref{algo:gwsym}.}
    \label{tab:golden}
    \begin{tabular}{|l|c|c|c|}
      \hline
      $\mathcal{V}$ & $N$ & $g$ & $G_{\min}$\\
      \hline
      Icosahedron & $12$ & $\frac{16544 + 7392 \sqrt{5}}{10 + 2 \sqrt{5}}$ & $\sim 4.5081$\\
      Dodecahedron & $20$ & $\frac{106272 + 47456 \sqrt{5}}{12}$ & $\sim 7.1741$\\
      Truncated cube & $24$ & $\frac{47040 + 23168 \sqrt{2}}{5 - 2 \sqrt{2}}$ & $\sim 8.5062$\\
      Rhombicuboctahedron & $24$ & $\frac{146128 + 100128 \sqrt{2}}{5 + 2 \sqrt{2}}$ & $\sim 8.5059$\\
      \hline
    \end{tabular}
  \end{center}
\end{table}

\section{An exact symmetries finding algorithm}
\label{sect:symmetries}

In this section we present an algorithm that, upon the input
of  any  arbitrary-dimensional   complex  point  set,  after
finitely  many-steps  outputs  its  exact  symmetries.   The
complexity of our  algorithm is polynomial in  the number of
points.

Previous  works~\cite{Hig85, WWV85,  BK02, KR16}  approached
the problem of finding the symmetries of any given point set
(and the  related problem of  testing the congruence  of two
sets) from the geometric viewpoint,  that is, by looking for
unitary transformations that act as permutations of the set.
As  a  consequence, previous  symmetries-finding  algorithms
depend  on the  full  field structure  (in particular,  they
depend on  the arithmetic  operation of division).   To this
aim, they assume the \textit{real computational model}, that
is, an  unphysical machine that  can exactly store  any real
number  and can  exactly perform  arithmetic, trigonometric,
and other functions over reals in finite time.

We instead approach the  symmetries-finding problem from the
viewpoint  of   combinatorics,  that  is,  by   looking  for
permutations of the set that act as unitary transformations.
In fact,  by using well-known  results on Gram  matrices, we
avoid  explicitly   dealing  with   unitary  transformations
altogether.   This  way,  we  present  a  symmetries-finding
algorithm   (that  can   also   be   trivially  adapted   to
congruence-testing)  that only  depends on  the weaker  ring
structure  (that  is, division  is  not  assumed).  Ours  is
therefore  an  \textit{integer   computational  model}  that
solely assumes the  ability to store integer  numbers and to
perform additions  and multiplication  in finite  time, thus
allowing  us to  achieve closed-form  analytical results  on
physical machines.

The factorial  growth of the  number of permutations  in the
cardinality $N$ of the set dooms to factorial complexity any
algorithm based  on a naive exhaustive  search.  However, by
exploiting a  well-known rigidity property of  simplices, we
show that without  loss of generality it  suffices to search
over  a   polynomial-sized  subset  of   permutations.   The
complexity of our  symmetries-finding algorithm is therefore
$O(N^{d+2})$, where $d$ denotes the dimension of the complex
space.

For   any    given   arbitrary-dimensional    spanning   set
$\mathcal{V}$  of complex  vectors, we  denote with  $\Sym (
\mathcal{V} )$  the group of permutations  of $\mathcal{V}$.
A permutation  $\sigma$ in  $\Sym(\mathcal{V})$ is  called a
\textit{geometric symmetry} (in  the following, symmetry for
short)  of $\mathcal{V}$  if and  only there  exists unitary
transformation $\mathcal{U}$ such that
\begin{align}
  \label{eq:symmetry}
  \sigma  \left(  \mathbf{v}  \right) =  \mathcal{U}  \left(
  \mathbf{v} \right),
\end{align}
where  $\mathbf{v} \in  \mathcal{V}^{\underline{N}}$ denotes
an  $N$-tuple  on  $\mathcal{V}$  without  repetitions.   We
denote  the group  of all  symmetries of  $\mathcal{V}$ with
$\Geom  (   \mathcal{V}  )$.  Notice  that   the  fact  that
$\mathcal{V}$ is a spanning  set guarantees that the mapping
between     $\sigma$     and    $\mathcal{U}$     satisfying
Eq.~\eqref{eq:symmetry} is bijective.

Since   the  computation   of  the   unitary  transformation
$\mathcal{U}$ in  Eq.~\eqref{eq:symmetry} requires divisions
in general (take for example $\mathbf{v}$ to be the vertices
of a square  and $\mathcal{U}$ to correspond  with a $\pi/2$
rotation), we  need an approach where  $\mathcal{U}$ remains
implicit. It is a well-known fact that two tuples of vectors
are unitarily  related if  and only  if their  Gram matrices
coincide, where  the Gram  matrix $G (  \mathbf{v} )$  of an
$N$-tuple $\mathbf{v}  := ( \mathbf{v}_i )_i$  of vectors is
the $N \times N$ matrix whose  $(i, j)$-th entry is given by
the inner product $\mathbf{v}_i \cdot \mathbf{v}_j$, that is
\begin{align}
  \label{eq:gram}
  \left[  G  \left(  \mathbf{v}  \right)  \right]_{i,  j}  =
  \mathbf{v}_i \cdot \mathbf{v}_j.
\end{align}
Hence,  tuples  $\sigma  ( \mathbf{v}  )$  and  $\mathbf{v}$
satisfy Eq.~\eqref{eq:symmetry} if and only if
\begin{align}
  \label{eq:symmetry_gram}
  G  \left( \sigma  \left(  \mathbf{v} \right)  \right) =  G
  \left( \mathbf{v} \right).
\end{align}

Since  Eq.~\eqref{eq:gram} can  be clearly  computed without
divisions,   this  observation   immediately   leads  to   a
division-free exact symmetries-finding  algorithm through an
exhaustive search  over the  set $\Sym  ( \mathcal{V}  )$ of
permutations.  The  complexity of  the exhaustive  search is
given by  the product  of the  complexity of  generating all
permutations  $\sigma  \in  \Sym(  \mathcal{V}  )$  and  the
complexity of computing and comparing the corresponding Gram
matrices.   According  to  Eq.~\eqref{eq:gram},  the  latter
complexity  is  $O(N^2)$.   Hence,  the  complexity  of  the
algorithm    is   $O(N!     N^2)$.     By   denoting    with
$\operatorname{nextJT}$ the  function that returns  the pair
of indexes $(i, j)$ that need  to be swapped to generate the
next   tuple  according   to  Heap's   or  Johnson-Trotter's
algorithm~\cite{Knu11} and $\operatorname{false}$ if the end
of   the  algorithm   has   been  reached,   we  arrive   at
Algorithm~\ref{algo:symfact}.

\begin{algorithm}[h!]
  \caption{Symmetries finding (exhaustive search)}
  \label{algo:symfact}
  \begin{algorithmic}
    \REQUIRE $d$-dimensional spanning set $\mathcal{V}$ of $N$ complex vectors
    \ENSURE $\mathcal{S} = \Geom ( \mathcal{V} ) ( \mathbf{v} )$, for some $\mathbf{v} \in \mathcal{V}^{\underline{N}}$
    \STATE $\mathbf{v} \in \mathcal{V}^{\underline{N}}$
    \STATE $\mathbf{v}' \gets \mathbf{v}$
    \STATE $\mathcal{S} \gets \emptyset$
    \WHILE {$(i, j) \gets \operatorname{nextJT} ( \mathbf{v} )$}
    \STATE $v_i' \leftrightarrow v_j'$
    \IF {$G(\mathbf{v}') = G(\mathbf{v})$}
    \STATE $\mathcal{S} \gets \mathcal{S} \cup \mathbf{v}'$ 
    \ENDIF
    \ENDWHILE
    \RETURN $\mathcal{S}$
  \end{algorithmic}
\end{algorithm}

We   proceed    now   to    improve   the    complexity   of
Algorithm~\ref{algo:symfact}      from     factorial      to
polynomial. Let $d$-tuple  $\mathbf{e} := ( e_i  )_{i = 1}^d
\in  \mathcal{V}^{\underline{d}}$  on $\mathcal{V}$  without
repetitions be  a basis,  that is,  the determinant  $\det G
(\mathbf{e})$ of its Gram matrix is non null.  Division-free
algorithms for the computation of the determinant are known;
for a  particularly simple one, see  Ref.~\cite{Bir11}.  Due
to Eq.~\eqref{eq:symmetry_gram},  a necessary  condition for
any permutation  $\sigma$ of $\mathcal{V}$ to  be a symmetry
is that
\begin{align}
  \label{eq:basis}
  G  \left( \sigma  \left(  \mathbf{e} \right)  \right) =  G
  \left( \mathbf{e} \right).
\end{align}
For any $d$-tuple  $\mathbf{e}'$ satisfying $G( \mathbf{e}')
=  G(  \mathbf{e} )$,  the  permutation  $\sigma$ such  that
Eq.~\eqref{eq:symmetry_gram} holds, if  it exists, is unique
and can be explicitly derived as follows.

For some  basis $\mathbf{e}$ and  any vectors $v_0,  v_1 \in
\mathbb{C}^d$  we say  $v_0 \prec_{\mathbf{e}}  v_1$ if  and
only if
\begin{align*}
  e_k \cdot \left( v_0 - v_1 \right) \le 0,
\end{align*}
where $k$  is the minimum  over $\{1, \dots, d\}$  such that
$e_k  \cdot  (  v_0 -  v_1  )$  is  not  null, and  we  call
$\mathbf{e}$-order      the       order      induced      by
$\prec_{\mathbf{e}}$.   Since   $\mathbf{e}$  is   a  basis,
$\mathbf{e}$-order   is   total.     Let   $\mathbf{v}   \in
\mathcal{V}^{\underline{N}}$  be   the  $\mathbf{e}$-ordered
$N$-tuple  on  $\mathcal{V}$  without repetitions.   Due  to
Eq.~\eqref{eq:symmetry}, for  any $k$  in $\{1,  \dots, d\}$
and any  $i$ in  $\{1, \dots, N\}$,  the inner  product $e_k
\cdot v_i$  equals the inner  product $\sigma ( e_k  ) \cdot
\sigma( v_i )$. Hence, tuple  $\sigma ( \mathbf{v} )$ is the
$\sigma( \mathbf{e} )$-ordered $N$-tuple of all the elements
of $\mathcal{V}$  without repetitions.  This  explicitly and
uniquely identifies permutation $\sigma$.

The complexity of  the algorithm is given by  the product of
the   complexity    of   generating   all    $d$-tuples   in
$\mathcal{V}^{\underline{d}}$   and    the   complexity   of
processing  each  tuple.   Since  the  combinations  of  $d$
elements  out  of  $N$  are  $N \choose  d$,  and  for  each
combination there  are $d!$ differently ordered  tuples, the
complexity~\cite{Sed77, Knu11} of  generating the $d$-tuples
is $O( {N \choose d})$, that  is, a polynomial of degree $d$
in $N$.  The complexity of  processing each tuple is the sum
of  the   complexity  of   generating  $\mathbf{r}'$-ordered
$N$-tuple  $\mathbf{v}'$, computing  the  Gram  matrix $G  (
\mathbf{v}' )$,  and comparing it  with $G (  \mathbf{v} )$,
hence  it  is  $O(N^2)$.    Hence,  the  complexity  of  the
algorithm   is   $O(N^{d   +   2})$.    By   denoting   with
$\operatorname{nextChase}$  the  function that  returns  the
pair of indexes $(i, j)$ that need to be swapped to generate
the    next   combination    according   to    the   Chase's
sequence~\cite{Knu11} and $\operatorname{false}$  if the end
of   the  algorithm   has   been  reached,   we  arrive   at
Algorithm~\ref{algo:sym}.

\begin{algorithm}[h!]
  \caption{Symmetries finding (polynomial time)}
  \label{algo:sym}
  \begin{algorithmic}
    \REQUIRE $d$-dimensional spanning set $\mathcal{V}$ of $N$ complex vectors
    \ENSURE $\mathcal{S} = \Geom ( \mathcal{V} ) ( \mathbf{v}' )$, for some $\mathbf{v}' \in \mathcal{V}^{\underline{N}}$
    \STATE $\mathbf{v} \gets \textrm{$\mathbf{e}$-ordered $N$-tuple in } \mathcal{V}^{\underline{N}}$
    \STATE $\mathcal{S} \gets \emptyset$

    \WHILE {$(i, j) \gets \operatorname{nextChase}(\mathbf{v}, \mathbf{e}')$}
    \STATE $e'_i \gets v_j$
    \WHILE {$(k, l) \gets \operatorname{nextJT}( \mathbf{e}' )$}
    \STATE $e_k' \leftrightarrow e_l'$
    \IF {$G (\mathbf{e}') =  G (\mathbf{e})$}
    \STATE $\mathbf{v}' \gets \textrm{$\mathbf{e}'$-ordered $N$-tuple in } \mathcal{V}^{\underline{N}}$
    \IF {$G( \mathbf{v}') = G ( \mathbf{v})$}
    \STATE $\mathcal{S} \gets \mathcal{S} \cup \mathbf{v}'$ 
    \ENDIF
    \ENDIF
    \ENDWHILE
    \ENDWHILE
    \RETURN $\mathcal{S}$
  \end{algorithmic}
\end{algorithm}

\section{Conclusion}
\label{sect:conclusion}

We showed  that the  computation of  the guesswork  of qubit
ensembles with uniform  probability distribution corresponds
to   a  quadratic   assignment  problem.   We  presented   a
division-free algorithm for the exact analytical computation
of the  guesswork with a more-than-quadratic  speedup in the
presence of  symmetries, that is, for  the three-dimensional
analog of the maximization  version of the turbine-balancing
problem.   As examples,  we computed  the exact  closed-form
expression for  the guesswork  of regular  and quasi-regular
ensembles of qubit states.

\section{Acknowledgments}

This   work  is   dedicated   to  the   memory  of   Takeshi
Koshiba. M.~D.  acknowledges support  from the Department of
Computer  Science and  Engineering, Toyohashi  University of
Technology,  the MEXT  Quantum Leap  Flagship Program  (MEXT
Q-LEAP)  Grant  No.   JPMXS0118067285,  JSPS  KAKENHI  Grant
Number JP20K03774,  and the  International Research  Unit of
Quantum Information, Kyoto  University.  F.  B. acknowledges
support from MEXT Quantum Leap Flagship Program (MEXT QLEAP)
Grant No.  JPMXS0120319794;  from MEXT-JSPS Grant-in-Aid for
Transformative  Research Areas  (A) “Extreme  Universe”, No.
21H05183;  from JSPS  KAKENHI  Grants No.  20K03746 and  No.
23K03230.

\end{document}